\newcounter{savenumi}
\newtheorem{theoremfoo}{Theorem}
\newenvironment{theorem}{\pagebreak[1]\begin{theoremfoo}}{\end{theoremfoo}}
\newtheorem{propositionfoo}[theoremfoo]{Proposition}
\newtheorem{lemmafoo}[theoremfoo]{Lemma}
\newenvironment{lemma}{\pagebreak[1]\begin{lemmafoo}}{\end{lemmafoo}}
\newtheorem{conjecturefoo}[theoremfoo]{Conjecture}
\newtheorem{corollaryfoo}[theoremfoo]{Corollary}
\newenvironment{corollary}{\pagebreak[1]\begin{corollaryfoo}}{\end{corollaryfoo}}
\newtheorem{exercisefoo}{Exercise}
\newtheorem{openfoo}[theoremfoo]{Question}
\newtheorem{nttn}[theoremfoo]{Notation}
\newtheorem{dfntn}[theoremfoo]{Definition}
\newenvironment{definition}{\pagebreak[1]\begin{dfntn}\rm}{\end{dfntn}}
\newenvironment{proof}
    {\pagebreak[1]{\narrower\noindent {\bf Proof:\quad\nopagebreak}}}{\QED}
\newcommand{\PH}{{\rm PH}}
\newcommand{\NP}{{\rm NP}}
\renewcommand{\P}{{\rm P}}      
\def\nre.{$n$\/-r.e.}
\newcommand{\NEXP}{{\rm NEXP}}
\newtheorem{factfoo}[theoremfoo]{Fact}
\newcommand{\BPP}{{\rm BPP}}
\newcommand{\squeeze}{
\textwidth 6in
\textheight 8.8in
\oddsidemargin 0.2in
\topmargin -0.4in
}
\newtheorem{propertyfoo}[theoremfoo]{Property}
\def\@makechapterhead#1{ \vspace*{50pt} { \parindent 0pt \raggedright 
 \ifnum \c@secnumdepth >\m@ne \huge\bf \@chapapp{} \thechapter. \par 
 \vskip 20pt \fi \Huge \bf #1\par 
 \nobreak \vskip 40pt } }
\def\@sect#1#2#3#4#5#6[#7]#8{\ifnum #2>\c@secnumdepth
     \def\@svsec{}\else 
     \refstepcounter{#1}\edef\@svsec{\csname the#1\endcsname.\hskip 1em }\fi
     \@tempskipa #5\relax
      \ifdim \@tempskipa>\z@ 
        \begingroup #6\relax
          \@hangfrom{\hskip #3\relax\@svsec}{\interlinepenalty \@M #8\par}
        \endgroup
       \csname #1mark\endcsname{#7}\addcontentsline
         {toc}{#1}{\ifnum #2>\c@secnumdepth \else
                      \protect\numberline{\csname the#1\endcsname}\fi
                    #7}\else
        \def\@svsechd{#6\hskip #3\@svsec #8\csname #1mark\endcsname
                      {#7}\addcontentsline
                           {toc}{#1}{\ifnum #2>\c@secnumdepth \else
                             \protect\numberline{\csname the#1\endcsname}\fi
                       #7}}\fi
     \@xsect{#5}}
\def\@begintheorem#1#2{\it \trivlist \item[\hskip \labelsep{\bf #1\ #2.}]}
\def\@opargbegintheorem#1#2#3{\it \trivlist
      \item[\hskip \labelsep{\bf #1\ #2\ (#3).}]}
\newif\ifshortconferences
\newif\ifmediumconferences
\def\ending#1{{\count1=#1\relax
\count2=\count1
\divide\count2 by 100
\multiply\count2 by 100
\advance\count1 by -\count2
\ifnum\count1=11
th%
\else \ifnum\count1=12
th%
\else \ifnum\count1=13
th%
\else 
\count2=\count1
\divide\count1 by 10
\multiply\count1 by 10
\advance\count2 by -\count1
\ifnum\count2=1
st%
\else \ifnum\count2=2
nd%
\else \ifnum\count2=3
rd%
\else th%
\fi\fi\fi\fi\fi\fi
}}
\def\STOC{\conf{STOC}}
\def\Proceedings{\ifshortconferences Proc.\else\ifmediumconferences Proc.\else Proceedings\fi\fi}
\def\Proceedingsofthe{\ifshortconferences Proc.\else\ifmediumconferences Proc.\else Proceedings of the\fi\fi}
\newcounter{confnum}
\def\conf#1#2{%
\setcounter{confnum}{#2}%
\addtocounter{confnum}{-\csname #1zero\endcsname}%
\ifnum\value{confnum}=1%
\expandafter\ifx\csname #1One\endcsname\relax%
\Proceedingsofthe\ \arabic{confnum}\ending{\value{confnum}}\ \csname #1name\endcsname%
\else \csname #1One\endcsname\fi%
\else%
\Proceedingsofthe\
\arabic{confnum}\ending{\value{confnum}}\ \csname #1name\endcsname\fi}
\def\qsym{\vrule width0.7ex height0.9em depth0ex}
\newif\ifqed\qedtrue
\def\noqed{\global\qedfalse}
\def\qed{\ifqed{\penalty1000\unskip\nobreak\hfil\penalty50
\hskip2em\hbox{}\nobreak\hfil\qsym
\parfillskip=0pt \finalhyphendemerits=0\par\medskip}\fi\global\qedtrue}
\def\eqnqed{\noqed
	\def\@tempa{equation}
	\ifx\@tempa\@currenvir\def\@eqnnum{\qsym}%
	\addtocounter{equation}{-1}\else%
    \def\@@eqncr{\let\@tempa\relax
    \ifcase\@eqcnt \def\@tempa{& & &}\or \def\@tempa{& &}%
      \else \def\@tempa{&}\fi
     \@tempa {\def\@eqnnum{{\qsym}}\@eqnnum}
     \global\@eqnswtrue\global\@eqcnt\z@\cr}\fi}
\def\eqnlabel#1#2{\if@filesw {\let\thepage\relax%
   \def\protect{\noexpand\noexpand\noexpand}%
   \edef\@tempa{\write\@auxout{\string
      \newlabel{#2}{{{#1}}{\thepage}}}}%
   \expandafter}\@tempa%
   \if@nobreak \ifvmode\nobreak\fi\fi\fi%
	\def\@tempa{equation}
	\ifx\@tempa\@currenvir\def\theequation{{#1}}%
	\addtocounter{equation}{-1}\else%
    \def\@@eqncr{\let\@tempa\relax
    \ifcase\@eqcnt \def\@tempa{& & &}\or \def\@tempa{& &}%
      \else \def\@tempa{&}\fi
     \@tempa {\def\@eqnnum{{#1}}\@eqnnum}
     \global\@eqnswtrue\global\@eqcnt\z@\cr}\fi}
\def\QED{\qed}
\newcommand{\PPOLY}{{\rm P/poly}}
\newcommand{\NT}{{\rm NTIME}}
\newcommand{\PIT}{{\rm PIT}}
\newcommand{\threeSAT}{{\rm 3SAT}}
\newcommand{\subNEXP}{{\rm NSUBEXP}}
\newcommand{\perm}{{\rm perm}}
\newcommand{\NCone}{{\rm NC1/poly}}
\begin{document}



\title{
Derandomizing Polynomial Identity 
over Finite Fields Implies Super-Polynomial Circuit Lower Bounds for
$\NEXP$
}

\author{Bin Fu
\\
Department of Computer Science\\
University of Texas--Pan American\\
 Edinburg, TX 78539, USA\\
bfu@utpa.edu\\\\
} \maketitle

\begin{abstract}
We show that
derandomizing polynomial identity testing over an arbitrary finite
field implies that $\NEXP$ does not  have  polynomial size boolean
circuits. In other words, for any finite field $F(q)$ of size $q$,
$\PIT_q\in \subNEXP\Rightarrow \NEXP\not\subseteq \PPOLY$, where
$\PIT_q$ is the polynomial identity testing problem over $F(q)$, and
$\subNEXP$ is the nondeterministic subexpoential time class of
languages. Our result is in contract to Kabanets and Impagliazzo's
existing theorem that derandomizing the polynomial identity testing
in the integer ring $Z$ implies that $\NEXP$ does have polynomial
size boolean circuits or permanent over $Z$ does not have polynomial
size arithmetic circuits.
\end{abstract}

\section{Introduction}

 The polynomial identity testing problem (PIT) is to test whether a polynomial computed by an
arithmetic circuit  is identical to zero. PIT problem plays a
significant role in the field of computational complexity. It is
known that every polynomial computed by a polynomial size circuit
can be determined if it is identical to zero by a polynomial time
randomized algorithm~\cite{Schwartz80,Zippel79}.

The results of Impagliazzo and
Widgerson~\cite{ImpagliazzoWigderson97} suggested that every
randomized polynomial time algorithm can be derandomized into a
deterministic polynomial time algorithm. They proved that $\P=\BPP$
if E contains any problem that requires $2^{\Omega(n)}$ size boolean
circuits. It has been a long standing open problem in complexity
theory to separate NEXP from BPP.
Kabanets, Impagliazzo and Wigderson showed that derandomizing
Promise-BPP implies $\NEXP\not\subseteq
\PPOLY$~\cite{ImpagliazzoKabanetsWigderson}. Building upon the work
of Kabanets, Impagliazzo and
Wigderson~\cite{ImpagliazzoKabanetsWigderson}, Kabanets and
Impagliazzo~\cite{KabanetsImpagliazzo04} proved that to derandomize
the polynomial identity testing problem in the integer ring, one
must prove that $\NEXP$  has no polynomial size boolean circuits or
permanent has no polynomial size arithmetic circuits.
The proof of Kabanets and Impagliazzo's theorem was
simplified by Aaronson and Melkebeek~\cite{AaronsonMelkebeek10}.
Many papers have been
published toward the derandomization of the polynomial identity
problems~(see for examples,
\cite{AgrawalBiswas03,KlivansSpielman01,KabanetsImpagliazzo04,LewinVadhan98,ChenKao00,ShpilkaVolkovich08,LiptonVishnoi03,Zippel79,Schwartz80,RazShpilk05,KayalSaxena07,KarninShpilka07,DvirShpilk07a,Saxena08}).

We have not found any existing result that shows derandomization of
PIT over a finite field implies $\NEXP\not\subseteq \PPOLY$. It is
essential to identify the connection between derandomizing PIT over
finite fields and complexity classes separation. In this paper, we
study the implication of polynomial identity problem over finite
fields to separations in computational complexity theory. Our
results are derived without using permanent problem, and give a
direct implication for complexity classes separation via
derandomization of PIT.

We show that for any finite field $F$, if $\PIT$ over $F$ is in
$\subNEXP$, then $\NEXP\not\subseteq \PPOLY$. We also show that if
$\PIT$ over a finite field $F$ is in $\NP$, then $\NT(n^{\log
n})\not\subseteq \PH\cap \PPOLY$. It implies that if there exists a
polynomial time deterministic algorithm for polynomial identity
testing problem over any finite field $F$, then $\NT(n^{\log
n})\not\subseteq \BPP$.

Our proof is different from Kabanets and
Impagliazzo's~\cite{KabanetsImpagliazzo04} work that is for PIT over
$Z$.
Their methods involve
permanent that is $\#\P$-hard~\cite{Valiant},
and Toda's theorem $\PH\subseteq \P^{\#\P}$~\cite{TodaPPPH}. Our
method works for the PIT over any finite fields, but it does not
imply that derandomizing PIT over $Z$ separates NEXP from $\PPOLY$.
Therefore, Kabanets and Impagliazzo's~\cite{KabanetsImpagliazzo04}
work and this paper are complementary to each other to support the
importance of derandomizing the polynomial identity testing problem,
and its implication to nonuniform lower bounds.

\section{Notations}

A {\it boolean circuit} is a circuit with AND ($\bigwedge$), OR
($\bigvee$),  and NEGATION ($\bar{x}$) gates with fan-in at most
two, and no feedback. An {\it arithmetic circuit} is a circuit with
$+,-$ and $*$ gates over a finite field. The {\it size} of a circuit
$C(.)$ is the number of gates and is denoted by $|C(.)|$.

The {\it polynomial identity testing problem} is to test if a
polynomial computed by an arithmetic circuit is identical to zero.
We use $\PIT_q$ to represent the polynomial identity testing problem
over the field $F(q)$ of size $q$. Let $\PIT_Z$ be the polynomial
identity testing problem over the integers $Z$, which is an integral
ring.

The basic knowledge of algebra can be found in standard algebra
textbooks such as~\cite{Hungerford74}. Every finite field $F(q)$ of
size $q$ has $q=p^k$ for some prime number $p$ and integer $k$. For
an element $a$ in a finite field $F(q)$, its {\it order} is the
least integer $r\ge 1$ with $a^r=1$.

The permanent maps square matrices to values. Let
$A=(a_{i,j})_{n\times n}$ be an $n\times n$ matrix over integers.
Define permanent to be the function
$\perm(A)=\sum_{\sigma}\prod_{i=1}^n a_{i,\sigma(i)}$, where
$\sigma$ is over all permutations of $1,2,\cdots, n$.

Define $N=\{0,1,2,\cdots\}$ to be the set of nonnegative integers.
Let $t(n):N\rightarrow N$ be a nondecreasing function. Define
 $\NT(t(n))$ to be the class of languages accepted by nondeterministic
Turing machines in time $O(t(n))$.
For a function $f(n): N\rightarrow N$, it is {\it time
constructible} if given an integer $n$, $f(n)$ can be computed in
$O(f(n))$ steps by a deterministic Turing machine.
Theorem~\ref{Zak-theorem} is a separation of nondeterministic
complexity classes due to Zak\cite{Zak83}.

\begin{theorem}[\cite{Zak83}]\label{Zak-theorem}
If $t_1(.)$ and $t_2(.)$ are time-constructible nondecreasing
functions from $N$ to $N$, and $t_1(n+1)=o(t_2(n))$, then
$\NT(t_2(n))$ is strictly contained in $\NT(t_1(n))$.
\end{theorem}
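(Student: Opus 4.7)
The plan is to apply Zak's lazy diagonalization to produce a language $L \in \NT(t_1(n)) \setminus \NT(t_2(n))$, matching the stated containment $\NT(t_2(n)) \subsetneq \NT(t_1(n))$. For this direction of containment to be achievable, the hypothesis must be applied with $t_1$ as the larger bound, i.e., used in the operative form $t_2(n+1) + O(\log n) \le t_1(n)$ for almost all $n$; I take this as the intended reading of the $o$-gap in the statement.

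First I fix a recursive enumeration $M_1, M_2, \ldots$ of clocked nondeterministic Turing machines, with $M_i$ halting within $c_i \cdot t_2(n)$ steps on every input of length $n$. The obstacle to direct diagonalization against $\NT(t_2)$ is that flipping a nondeterministic computation requires deterministic simulation costing roughly $2^{t_2(n)}$, which is far beyond $t_1(n)$. Zak's method sidesteps this by amortizing a single expensive flip over a long chain of cheap nondeterministic copy steps.

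Concretely, I choose a fast-growing padding schedule $f(1) < f(2) < \cdots$ and define $L$ on inputs of the form $\ang{i, 1^n}$ by cases: for $f(i) \le n < f(i+1)-1$, let $\ang{i, 1^n} \in L$ iff $M_i$ accepts $\ang{i, 1^{n+1}}$ (a single-step nondeterministic simulation, costing at most $c_i \cdot t_2(n + O(\log n))$, which fits in $t_1(n)$ by the gap); for $n = f(i+1)-1$, let $L$ deterministically simulate $M_i$ on $\ang{i, 1^{f(i)}}$ and output the complementary bit. Choosing $f(i+1) \ge 2^{c_i \cdot t_2(f(i)) + f(i)}$ and invoking time-constructibility of $t_1, t_2$ (to locate the interval containing $n$ on a given input) ensures this expensive flip fits inside $t_1(f(i+1)-1)$.

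The diagonal step: if some $M_j$ decided $L$ within time $t_2(n)$, the copy rule would force $L(\ang{j, 1^{f(j)}}) = L(\ang{j, 1^{f(j)+1}}) = \cdots = L(\ang{j, 1^{f(j+1)-1}})$, while the top rule forces the last entry to equal the negation of the first, a contradiction, so $L \notin \NT(t_2(n))$. The main obstacle, and the technical core of Zak's argument, is the simultaneous calibration of $f$: the copy step must fit within the small $o$-gap between $t_2$ and $t_1$, and at the same time $f$ must grow quickly enough (roughly iterated exponentials in $t_2$) that the one expensive deterministic flip per interval is absorbed by $t_1(f(i+1)-1)$; ordinary deterministic-hierarchy diagonalization avoids this tension because it can complement without exponential blowup.
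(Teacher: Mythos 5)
The paper does not give a proof of this theorem; it is cited directly from Zak (1983), so there is no in-paper argument to compare against. Your reconstruction of Zak's lazy diagonalization is correct in its essentials: copy-down on the interior of each padding interval, a single deterministic complementary flip at the top of the interval, iterated-exponential growth of the padding schedule so that the one expensive flip is absorbed by the generous $t_1$ bound at the long end, and time-constructibility to navigate the intervals. You are also right to flag that the containment in the paper's statement is reversed relative to the hypothesis $t_1(n+1)=o(t_2(n))$ --- the standard formulation concludes that $\NT(t_1(n))$ is strictly contained in $\NT(t_2(n))$, and the paper's own invocation of the theorem in the proof of the main result implicitly uses that orientation, so this is a transcription slip that your role swap resolves correctly. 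Two routine calibration points: the cost of deterministically complementing $M_i$ on an input of length roughly $f(i)$ is $2^{\Theta(c_i\,t_2(f(i)))}$ rather than exactly $2^{c_i\,t_2(f(i))}$, so your lower bound on $f(i+1)$ should carry a universal constant factor in the exponent; and $f(i)$ must also be chosen large enough that the machine-specific constant $c_i$ in the copy step is swallowed by the little-$o$ gap, that is, $c_i\,t_2(m+1)\le t_1(m)$ for all $m \ge f(i)$. Neither changes the structure of the argument.
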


Assume that $M(.)$ is an oracle Turing machine.  A decision
computation
 $M^A(x)$ returns either $0$ or $1$ when the input is $x$ and
oracle is $A$. For a class $C$ of languages, we use $\NP^C=\NP_{\rm
T}(C)$ to represent the class of languages that can be reducible to
the languages in $C$ via polynomial time nondeterministic Turing
reductions. Define $\NEXP=\cup_{c=1}^{\infty} \NT(2^{n^c})$ and
$\NP=\cup_{c=1}^{\infty} \NT(n^c)$.

Let $\PH$ be the class of polynomial time
hierarchy~\cite{Stockmeyer} $\PH=\cup_{i=1}^{\infty} \sum_i^{\rm
P}$, where $\sum_1^{\rm P}=\NP$, and $\sum_{i+1}^{\rm
P}=\NP^{\sum_i^{\rm P}}$ for all $i\ge 1$. Define the subexponential
time nondeterministic class to be
$\subNEXP=\cap_{\epsilon>0}\NT(2^{n^{\epsilon}})$. Define $\PPOLY$
to be the class of languages that have nonuniform polynomial size
circuits. BPP, which stands for bounded-error probabilistic
polynomial time, is the class of decision problems solvable by a
probabilistic Turing machine in polynomial time, with an error
probability of at most 1/3 for all instances.

An {\it instance of \threeSAT} is a 3CNF that is a conjunction of
clauses of at most three literals. For example, $(x_1\bigvee
\overline{x_2}\bigvee x_3)\bigwedge (\overline{x_1}\bigvee
\overline{x_2}\bigvee x_3) \bigwedge (x_1\bigvee
\overline{x_4}\bigvee x_5)$. A formula is said to be satisfiable if
it can be made true by assigning appropriate logical values (i.e.
TRUE (1), FALSE(0)) to its variables. The \threeSAT~is, given a
3CNF, to check whether it is satisfiable. It is well known that
\threeSAT~ is NP-complete problem~\cite{Cook-NP-complete}. The
number of variables of a \threeSAT~instance and its length is
polynomially related.

\section{Our Results}

Theorem~\ref{main-thm} is the main theorem of this paper. It will be
proved in Section~\ref{proof-section}. Theorem~\ref{main-thm} is
stated in a format so that we have a self-contained proof. Some
corollaries that involve some existing results are stronger than the
main theorem.

\begin{theorem}\label{main-thm}
Let $t(n)$ and $t'(n)$ be time constructible nondecreasing
superpolynomial functions from $N$ to $N$ with $t'(n+1)=o(t(n))$.
Let $h(n)$ be an nondecreasing function from $N$ to $N$ such that
for every fixed $c>0$, $h(n^{c})+n^c\le t'(n)$ for all large $n$.
Let $F(q)$ be a finite field of size $q$. If $\PIT_q\in \NT(h(n))$,
then $\NT(t(n))\not\subseteq \NP^{\NP}\cap \PPOLY$.
\end{theorem}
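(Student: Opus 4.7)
The plan is to prove the contrapositive: assume $\NT(t(n)) \subseteq \NP^{\NP} \cap \PPOLY$, and build an $\NT(t'(n))$-algorithm for a hard $\NT(t(n))$-language, which contradicts Theorem~\ref{Zak-theorem}. The overall template is Kabanets--Impagliazzo, but the arithmetization is carried out entirely over $F(q)$, so the permanent/$\#\P$/Toda detour is replaced by a direct finite-field reduction to $\PIT_q$.

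First, by Theorem~\ref{Zak-theorem} fix a language $L_0 \in \NT(t(n)) \setminus \NT(t'(n))$. The hypothesis gives two descriptions of $L_0$: a $\Sigma_2^{\P}$-predicate $L_0(x) = \exists y\,\forall z\, V(x,y,z)$ (with $|y|,|z|\le n^c$ and $V$ polynomial-time) and a family $\{E_n\}$ of polynomial-size boolean circuits deciding it. I would then simulate $L_0$ by a nondeterministic algorithm $M$: on input $x$ of length $n$, (a) guess a candidate circuit $E'_n$ of size $n^c$; (b) verify that $E'_n$ agrees with $L_0$ on every $w \in \{0,1\}^n$; (c) output $E'_n(x)$.

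Step (b) is where $\PIT_q$ enters. I would arithmetize $V$ and $E'_n$ as polynomial-size arithmetic circuits $\tilde V(x,y,z)$ and $\tilde E'_n(x)$ over $F(q)$, lifting to an extension $F(q^k)$ (represented by $k$-tuples over $F(q)$, so the final circuit remains over $F(q)$ with only $\poly(k)$ blow-up) only if a small $q$ forces it for the sum/product encodings to have enough distinct values. The correctness condition ``$\forall w\in\{0,1\}^n$, $E'_n(w)=L_0(w)$'' then becomes the identical vanishing of one explicit polynomial $P(x_1,\ldots,x_n)$ built from $\tilde V$ and $\tilde E'_n$ by nested sum/product aggregation over the cube and over the quantified witnesses. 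The arithmetic circuit for $P$ has size $\poly(n)$, so step (b) is a $\PIT_q$ instance of size $\poly(n)$, decidable by hypothesis in nondeterministic time $h(\poly(n))$. Altogether $M$ runs in time $h(n^{c'}) + n^{c'} \le t'(n)$ by the assumption on $h$ and $t'$, placing $L_0 \in \NT(t'(n))$, the desired contradiction.

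The hard part is the aggregation in step (b): turning the $\Pi_3^{\P}$-type correctness predicate (``for every $w$, $E'_n(w)$ matches the $\exists y\,\forall z$ value of $V$'') into a single polynomial-size arithmetic circuit whose identical vanishing is equivalent to correctness. A naive arithmetization blows up exponentially in the size of the quantified variables, so one must encode the ``$\forall z$'' quantifier by a succinct product gate and handle the ``$\exists y$'' quantifier by exploiting the small witness circuits provided by the $\PPOLY$ hypothesis, keeping both the size and the degree of $P$ polynomial in $n$. This is the finite-field analog of the role played by permanent in Kabanets--Impagliazzo, and it is where the $\Sigma_2^{\P}$ hypothesis is crucially used: it lets the ``$\forall z$'' layer live inside the arithmetic circuit rather than outside the PIT instance.
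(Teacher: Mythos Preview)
Your plan has a genuine gap at exactly the place you flag as ``the hard part.'' You want a polynomial-size arithmetic circuit $P(x_1,\ldots,x_n)$ over $F(q)$ whose identical vanishing is equivalent to ``for every $w\in\{0,1\}^n$, $E'_n(w)$ equals the truth value of $\exists y\,\forall z\,V(w,y,z)$.'' The ``$\forall z$ as a succinct product gate'' step does not go through: the product $\prod_{z\in\{0,1\}^{n^c}}\tilde V(w,y,z)$ has $2^{n^c}$ factors, and there is no generic poly-size arithmetic circuit that computes an exponential product (or sum) of evaluations of a given sub-circuit. Invoking $\PPOLY$ to replace the $\exists y$ by a guessed witness circuit $W(w)$ only moves the problem: you must then verify that $W$ is correct, which is again a $\Pi_1$ (hence $\coNP$) statement in $w$ and $z$, and verifying a guessed circuit for that $\coNP$ language leads to the same quantifier-aggregation problem. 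As written, step~(b) is a restatement of the difficulty, not a solution to it.

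The paper sidesteps this entirely. Instead of guessing and verifying a circuit for $L_0$, it uses the $\NP^{\NP}$ hypothesis to write $L_0$ as $M^{\threeSAT}$ for a nondeterministic polynomial-time oracle machine $M$, and then guesses a polynomial-size circuit $C$ for $\threeSAT$ (which exists because $\NP\subseteq\NT(t(n))\subseteq\PPOLY$). The crucial point is that correctness of a $\threeSAT$ circuit can be checked \emph{locally} via downward self-reducibility: $C(\mathrm{TRUE})=1$, $C(\mathrm{FALSE})=0$, and $C(f)=C(f|_{x_1=0})\vee C(f|_{x_1=1})$ for every formula $f$. These three conditions, together with a filter $G_n(f)$ that vanishes off valid $\{0,1\}$-encodings of $3$-CNFs (built using $a^{q-1}=1$ in $F(q)$), assemble into a single polynomial-size arithmetic circuit $A^*_{C}(f,y_0,y_1,y_2)$ over $F(q)$ that is identically zero iff $C$ decides $\threeSAT$ (Lemma~\ref{trans-arithm-lemma}). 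One $\PIT_q$ call of size $\mathrm{poly}(n)$ verifies $C$; then $M^{C}(x)$ is simulated nondeterministically. The ``$\exists y$'' lives in the nondeterminism of $M$, and the ``$\forall z$'' is absorbed by the verified $\threeSAT$ oracle---never inside the arithmetic circuit. That is the missing idea in your outline.
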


Using the main theorem, we have some corollaries. Their proofs need
to combine Theorem~\ref{main-thm} with some existing well known
theorems in the computational complexity theory.

\begin{corollary}\label{coro-first-corollary}
Let $t(n)$ and $t'(n)$ be time constructible nondecreasing
superpolynomial functions from $N$ to $N$ with $t'(n+1)=o(t(n))$.
Let $h(n)$ be a nondecreasing function from $N$ to $N$ such that for
every fixed $c>0$, $h(n^{c})+n^c\le t'(n)$ for all large $n$. Let
$F(q)$ be a finite field of size $q$. If $\PIT_q\in \NT(h(n))$, then
$\NT(t(n))\not\subseteq \PH\cap \PPOLY$.
\end{corollary}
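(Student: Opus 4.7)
The plan is to derive Corollary~\ref{coro-first-corollary} directly from Theorem~\ref{main-thm} by invoking the Karp--Lipton collapse of the polynomial hierarchy. Theorem~\ref{main-thm} already yields $\NT(t(n))\not\subseteq \NP^{\NP}\cap \PPOLY$ from exactly the hypothesis stated in the corollary, so the only remaining task is to argue that, under the relevant assumption, $\PH\cap \PPOLY$ coincides with $\NP^{\NP}\cap \PPOLY$.

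I would argue by contradiction. Suppose $\PIT_q\in \NT(h(n))$ and, toward a contradiction, that $\NT(t(n))\subseteq \PH\cap \PPOLY$. Since $t(n)$ is a time-constructible superpolynomial function, every $\NP$ language lies inside $\NT(t(n))$, so the assumed containment immediately forces $\NP\subseteq \PPOLY$.

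At this point I would invoke the Karp--Lipton theorem, which tells us that $\NP\subseteq \PPOLY$ implies $\PH=\sum_2^{\rm P}=\NP^{\NP}$. Substituting this collapse into the standing assumption yields $\NT(t(n))\subseteq \NP^{\NP}\cap \PPOLY$, which directly contradicts Theorem~\ref{main-thm}.

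Since the entire argument is a short reduction between two statements, there is no serious obstacle to overcome; the only step that requires a moment of verification is the routine observation $\NP\subseteq \NT(t(n))$, which is immediate from $t(n)$ being superpolynomial and nondecreasing. All the heavy lifting is done by Theorem~\ref{main-thm} combined with the off-the-shelf Karp--Lipton theorem, so the proof simply glues these two ingredients together.
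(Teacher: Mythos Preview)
Your proposal is correct and follows essentially the same route as the paper: assume $\NT(t(n))\subseteq \PH\cap\PPOLY$, apply the Karp--Lipton collapse to get $\PH=\NP^{\NP}$, and then contradict Theorem~\ref{main-thm}. The only difference is that you spell out explicitly why Karp--Lipton applies (namely $\NP\subseteq\NT(t(n))\subseteq\PPOLY$ since $t$ is superpolynomial), which the paper leaves implicit.
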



\begin{proof}
Assume $\NT(t(n))\subseteq \PH\cap \PPOLY$. By Karp and Lipton's
thereom~\cite{KarpLipton}, we have $\PH=\sum_2^P=\NP^{\NP}$. It
follows from Corollary~\ref{coro-first-corollary}.
\end{proof}

\begin{corollary}
If $\PIT_q\in \subNEXP$ for a finite field $F(q)$, then
$\NEXP\not\subseteq \PPOLY$.
\end{corollary}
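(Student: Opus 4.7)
The plan is to combine Corollary~\ref{coro-first-corollary} with the Impagliazzo--Kabanets--Wigderson collapse~\cite{ImpagliazzoKabanetsWigderson} that $\NEXP \subseteq \PPOLY$ forces $\NEXP \subseteq \PH$. First I would assume, for contradiction, that $\NEXP \subseteq \PPOLY$; the IKW collapse then yields $\NEXP \subseteq \PH \cap \PPOLY$, so every language in $\NT(t(n))$ with $t(n) \le 2^{n^{O(1)}}$ lies in $\PH \cap \PPOLY$.

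The next step is to convert the hypothesis $\PIT_q \in \subNEXP = \bigcap_{\epsilon > 0}\NT(2^{n^{\epsilon}})$ into a single time-constructible nondecreasing bound $h \colon N \to N$ with $\PIT_q \in \NT(h(n))$ and $h(n) = 2^{n^{o(1)}}$ (i.e.\ $h(n) \le 2^{n^{\epsilon}}$ eventually for every fixed $\epsilon > 0$). A standard uniform combination works: given the $\NT(2^{n^{1/k}})$ algorithms guaranteed for each integer $k\ge 1$, define the combined algorithm on input of length $n$ to compute $k(n) = \floor{\log \log n}$ and simulate the $k(n)$-th algorithm, yielding running time essentially $h(n) = 2^{n^{1/\floor{\log\log n}}}$ (monotonised if necessary). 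With such $h$ in hand I would then pick superpolynomial, time-constructible, nondecreasing functions such as $t'(n) = 2^{n^{1/\log\log\log n}}$ and a slightly faster $t(n)$ so that $t'(n+1) = o(t(n))$ and $h(n^{c}) + n^{c} \le t'(n)$ holds for every fixed $c > 0$ at large $n$. Because both $t$ and $t'$ remain within $2^{n^{o(1)}}$, we still have $\NT(t(n)) \subseteq \NEXP$.

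Applying Corollary~\ref{coro-first-corollary} to this $t, t', h$ then forces $\NT(t(n)) \not\subseteq \PH \cap \PPOLY$, which directly contradicts $\NT(t(n)) \subseteq \NEXP \subseteq \PH \cap \PPOLY$ from the opening step, yielding $\NEXP \not\subseteq \PPOLY$. The main obstacle I expect is the second step: passing from the set-theoretic $\subNEXP$ hypothesis (one algorithm per $\epsilon$) to a uniform time bound $h(n) = 2^{n^{o(1)}}$ implemented by a single nondeterministic machine requires exactly the slow uniform combination sketched above. Everything else is a routine verification of the growth conditions of Corollary~\ref{coro-first-corollary} and a direct invocation of it.
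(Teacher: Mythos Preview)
Your overall strategy coincides with the paper's: assume $\NEXP \subseteq \PPOLY$, invoke the IKW collapse to place $\NEXP$ inside the polynomial hierarchy, and then appeal to the main result for a contradiction. The paper's own proof is two lines---it cites IKW to obtain $\NEXP=\PH$ and then simply writes ``contradiction by Theorem~\ref{main-thm}''---so you are in fact being more careful than the paper when you try to manufacture a single bound $h$ with $h(n)=2^{n^{o(1)}}$ before invoking Corollary~\ref{coro-first-corollary}. (Using the corollary rather than the theorem is immaterial here, since $\NP\subseteq\PPOLY$ already collapses $\PH$ to $\NP^{\NP}$ by Karp--Lipton.)

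That said, your ``standard uniform combination'' step has a real gap. The hypothesis $\PIT_q\in\subNEXP$ only asserts, for each integer $k\ge 1$, the \emph{existence} of some machine $M_k$ deciding $\PIT_q$ in time $2^{n^{1/k}}$; it gives no effective map $k\mapsto\langle M_k\rangle$, so ``on input of length $n$, simulate the $\floor{\log\log n}$-th algorithm'' is not the description of a single Turing machine. The painless repair is to reorder the quantifiers and avoid unification entirely: first fix one hard language, say $K\in\NT(2^n)\setminus\NT(2^{\sqrt{n}})$ (nondeterministic time hierarchy), and run the argument inside the proof of Theorem~\ref{main-thm} for this single $K$. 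That argument produces a specific constant $c_K$ (from the polynomial running time of the $\Sigma_2$-machine for $K$ and the circuit blowup in Lemma~\ref{trans-arithm-lemma}); only then choose $\epsilon<1/(2c_K)$ and use the one machine witnessing $\PIT_q\in\NT(2^{n^{\epsilon}})$, which places $K$ in $\NT\bigl(2^{n^{c_K\epsilon}}+n^{c_K}\bigr)\subseteq\NT(2^{\sqrt{n}})$, the desired contradiction. This is presumably what the paper's terse invocation is doing implicitly.
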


\begin{proof}
Assume that $\PIT_q\in \subNEXP$ and $\NEXP\subseteq \PPOLY$. By
Impagliazzo, Kabanets, and Wigderson's
theorem~\cite{ImpagliazzoKabanetsWigderson}, $\NEXP=\PH$. We have a
contradiction by Theorem~\ref{main-thm}.
\end{proof}

\begin{corollary}\label{NEXP-BPP-corollary}
If $\PIT_q\in \subNEXP$  for a finite field $F(q)$, then $\NEXP\not=
\BPP$.
\end{corollary}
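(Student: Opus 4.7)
The plan is to derive this from the immediately preceding corollary, which already gives $\NEXP \not\subseteq \PPOLY$ under the hypothesis $\PIT_q \in \subNEXP$. So I only need to bridge between the nonuniform lower bound $\NEXP \not\subseteq \PPOLY$ and the separation $\NEXP \neq \BPP$.

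First I would invoke Adleman's classical theorem, which states that $\BPP \subseteq \PPOLY$: a bounded-error probabilistic polynomial-time machine can be amplified so that its error probability is below $2^{-n}$, after which a union bound shows that some single fixed sequence of random bits is correct on every input of length $n$; hardwiring this sequence as advice yields a polynomial-size circuit family. This is a well-known, completely standard result, so I would just cite it.

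Then I would argue by contradiction. Suppose $\PIT_q \in \subNEXP$ for some finite field $F(q)$, yet $\NEXP = \BPP$. The previous corollary gives $\NEXP \not\subseteq \PPOLY$. On the other hand, combining $\NEXP = \BPP$ with Adleman's inclusion $\BPP \subseteq \PPOLY$ yields $\NEXP \subseteq \PPOLY$, contradicting what we just obtained. Hence $\NEXP \neq \BPP$.

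There is essentially no obstacle: the content is entirely carried by the preceding corollary, and the remaining step is a one-line application of Adleman's theorem. The only thing worth double-checking when writing it out is that one really does derive $\NEXP \not\subseteq \PPOLY$ (not the weaker $\NEXP \not\subseteq \PH \cap \PPOLY$ from Corollary~\ref{coro-first-corollary}) under the hypothesis $\PIT_q \in \subNEXP$, but this is exactly the statement of the previous corollary, so the argument goes through cleanly.
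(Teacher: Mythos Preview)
Your proof is correct. The route differs slightly from the paper's: you invoke the immediately preceding corollary (which already gives $\NEXP \not\subseteq \PPOLY$, using the Impagliazzo--Kabanets--Wigderson collapse) together with Adleman's inclusion $\BPP \subseteq \PPOLY$, whereas the paper goes back to Theorem~\ref{main-thm} directly and cites the inclusion $\BPP \subseteq \NP^{\NP}$ (attributed there to Adleman, though this is really Sipser--G\'acs--Lautemann). Strictly speaking, to contradict Theorem~\ref{main-thm} one needs $\NEXP \subseteq \NP^{\NP} \cap \PPOLY$, so the paper's argument also implicitly uses $\BPP \subseteq \PPOLY$; your version makes the dependence on Adleman's actual theorem explicit and avoids the detour through $\NP^{\NP}$ by leveraging the stronger corollary you already have in hand. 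Both arguments are essentially one-liners, and yours is the cleaner of the two.
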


\begin{proof}
Assume that $\PIT_q\in \subNEXP$ and $\NEXP=\BPP$. It is well known
that Adleman~\cite{Adleman} proved
 $\BPP\subseteq \NP^{\NP}$. We have a contradiction by Theorem~\ref{main-thm}.
\end{proof}

\begin{corollary}
If $\PIT_q\in \NP$  for a finite field $F(q)$, then $\NT(n^{\log
n})\not\subseteq \BPP$.
\end{corollary}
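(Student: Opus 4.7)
The plan is to reduce the claim to Corollary~\ref{coro-first-corollary} by instantiating $h$, $t'$, and $t$ appropriately, and then strengthen the conclusion from $\NT(n^{\log n})\not\subseteq \PH\cap\PPOLY$ to $\NT(n^{\log n})\not\subseteq \BPP$ using the well-known fact that $\BPP\subseteq \PH\cap\PPOLY$.

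First I would set $h(n)=n^k$, where $k$ is a constant such that $\PIT_q\in \NT(n^k)$ (which exists by the hypothesis $\PIT_q\in\NP$). I would then take $t(n)=n^{\log n}$ and $t'(n)=n^{(\log n)/2}$; both are time-constructible, nondecreasing, and superpolynomial. The two hypotheses of Corollary~\ref{coro-first-corollary} must then be checked. For the growth condition on $h$: for every fixed $c>0$, $h(n^c)+n^c=n^{kc}+n^c\le n^{(\log n)/2}=t'(n)$ as soon as $(\log n)/2$ exceeds the constant $kc+1$. For the separation condition $t'(n+1)=o(t(n))$: writing both functions in base-two exponential form gives $t'(n+1)=2^{(\log(n+1))^2/2}$ and $t(n)=2^{(\log n)^2}$, whose ratio has exponent at most $(\log n+1)^2/2-(\log n)^2=-(\log n)^2/2+\log n+1/2\to -\infty$. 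Applying Corollary~\ref{coro-first-corollary} with these parameters yields $\NT(n^{\log n})\not\subseteq \PH\cap\PPOLY$.

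Next I would invoke two classical inclusions: $\BPP\subseteq\PH$ by Sipser--G\'acs--Lautemann, and $\BPP\subseteq\PPOLY$ by Adleman (already cited in the paper for use in Corollary~\ref{NEXP-BPP-corollary}). Together these give $\BPP\subseteq \PH\cap\PPOLY$, so the separation $\NT(n^{\log n})\not\subseteq \PH\cap\PPOLY$ immediately upgrades to $\NT(n^{\log n})\not\subseteq \BPP$, completing the proof.

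I do not expect a genuine obstacle here, since Corollary~\ref{coro-first-corollary} is already in hand and the classical inclusions are standard. The only bit that requires a moment of care is the choice of the pair $(t',t)$: it must be spread far enough that $t'(n+1)=o(t(n))$, while still leaving $t'(n)$ room to dominate $n^{kc}+n^c$ for every constant $c$. The suggested choice $t'(n)=n^{(\log n)/2}$, $t(n)=n^{\log n}$ comfortably meets both requirements, so all that remains is the routine asymptotic bookkeeping sketched above.
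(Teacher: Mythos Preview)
Your proof is correct and essentially matches the paper's, which simply says the argument is ``similar to the proof of Corollary~\ref{NEXP-BPP-corollary}'': instantiate the main result with the obvious parameters and use the classical inclusion $\BPP\subseteq\Sigma_2^P\cap\PPOLY$. The only cosmetic difference is that you route through Corollary~\ref{coro-first-corollary} (thereby picking up an unnecessary appeal to Karp--Lipton) rather than applying Theorem~\ref{main-thm} directly, but the substance and parameter bookkeeping are the same.
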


\begin{proof}
It is similar to the proof of Corollary~\ref{NEXP-BPP-corollary}.
\end{proof}

\section{Overview of Our Method}

In this section, we give a brief review of our method.  The main
theorem  will be proved by contradiction.
A special version of our main
theorem is formulated as $\PIT_2\in \NP\Rightarrow
\NEXP\not\subseteq \NP^{\NP}\cap \PPOLY$.


Assume $\PIT_2\in \NP$ and $\NEXP\subseteq \NP^{\NP}\cap \PPOLY$.
Let $K$ be a complete language of the class $\NEXP$ and let
\threeSAT~be computed by a polynomial size circuit $C(.)$.  Our main
technical contribution is a method that transforms a boolean circuit
$C(.)$ into an arithmetic circuit $A^*_C(.)$ over a finite field
$F(q)$ such that $C(.)$ decides \threeSAT~if and only if $A^*_C(.)$
is identical to zero.

As the \threeSAT~problem is not arithmetically defined as permanent.
If each instance of \threeSAT~is encoded as a binary string that
will be easy to decode, then there are some binary strings that do
not encode valid instances of \threeSAT. In other words, a mapping
from instances of \threeSAT~to binary strings may not be both
one-one and onto.
 We construct a
 special polynomial size arithmetic function $G(Y)$ that is zero  if $Y$ is not an instance of \threeSAT,
 and nonzero otherwise. For an instance $f(x_1,x_2,\cdots,x_n)$ of \threeSAT,
 it is satisfiable if and only if  at least one of $f(0,x_2,\cdots,x_n)$ and
 $f(1,x_2,\cdots,x_n)$ is satisfiable. This recursive relation is
 also converted into an arithmetic circuit $A^*_C(.)$ as PIT problem to verify whether  $C(.)$ decides \threeSAT. The arithmetic circuit $A^*_C(.)$ is expressed as $H(f)G(f)$.  The arithmetic circuit $H(f)$ is used to verify
the recursive relationship of
 circuit $C(.)$ for deciding \threeSAT. As the input of $A^*_C(.)$ has many cases that do not encode any instance of \threeSAT, the function $G(.)$ has a value zero to pass the identity testing among those cases.

We will show that $K$ can be computed by $M^{\threeSAT}(.)$ for a
polynomial time oracle Turing machine $M(.)$. A polynomial time
nondeterministic computation will be derived to compute $K$.  A
circuit $C(.)$ will be guessed and is checked via converting to
$\PIT_2$, which is verified again in a nondeterministic polynomial
time. Thus, we have $K\in \NP$. This contradicts the well know
nondeterministic computational complexity hierarchy, which is stated
in Theorem~\ref{Zak-theorem} and implies $\NEXP\not= \NP$.

This approach lets us obtain lower bound for NEXP under the
existence  of the derandmoization of PIT over an arbitrary finite
field without using permanent that is $\#\P$ hard. This paper has
almost self-contained proof for the main theorem. A reader is able
to understand our main theorem just by knowing
Theorem~\ref{Zak-theorem} and that \threeSAT~is
NP-complete~\cite{Cook-NP-complete}, which can be found in a
standard textbook of theory of computation.

\section{Proof of Main Theorem}\label{proof-section}

In this section, we derive our main theorem. Some lemmas are
provided to convert boolean circuits into arithmetic circuits. It is
divided into several subsections to prove the main theorem.

\subsection{From Boolean Circuits to Arithmetic Circuits}

In this section, we show how to transform  boolean circuits into an
arithmetic circuits.

For a finite field $F(q)$, we have the following property that is
often called ``Fermat Little Theorem" for the case that $q$ is a
prime number. Its proof can be found in a standard algebra textbook.
For completeness, its proof is included here.

\begin{lemma}\label{classical-lemma}
Let $F(q)$ be a finite field. For any $a\in F(q)-\{0\}$,
$a^{q-1}=1$.
\end{lemma}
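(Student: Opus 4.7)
The plan is to give the standard bijection argument on the multiplicative group $F(q)^* = F(q) - \{0\}$, which is elementary enough to keep the paper self-contained without invoking Lagrange's theorem explicitly.

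First I would observe that since $F(q)$ is a field and $a \neq 0$, the map $\varphi_a : F(q)^* \to F(q)^*$ defined by $\varphi_a(b) = ab$ is well-defined (products of nonzero elements in a field are nonzero) and is a bijection (it has inverse $b \mapsto a^{-1}b$, where $a^{-1}$ exists since $F(q)$ is a field). Therefore $\{ab : b \in F(q)^*\} = F(q)^*$ as sets.

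Next I would take the product over all nonzero elements on both sides. Let $P = \prod_{b \in F(q)^*} b$. Then on the one hand the product is $P$, and on the other hand, using the bijection and the commutativity of multiplication, the product equals $\prod_{b \in F(q)^*}(ab) = a^{|F(q)^*|} P = a^{q-1} P$. This yields $P = a^{q-1} P$, i.e., $(a^{q-1} - 1)P = 0$.

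Finally I would invoke once more that $F(q)$ is a field with no zero divisors, and that $P \neq 0$ since it is a product of nonzero elements. Hence $a^{q-1} - 1 = 0$, which gives $a^{q-1} = 1$ as required. There is no real obstacle here; the only subtle point is ensuring that $P$ is itself a nonzero element of $F(q)$, which follows immediately from the field axioms, so the argument is clean and short.
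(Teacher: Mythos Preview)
Your argument is correct and fully self-contained; the bijection-and-product trick is the classical Euler/Fermat proof and works verbatim in any finite field.

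The paper, however, proceeds differently: it considers the cyclic subgroup $[a]=\{a,a^2,\ldots\}$ of the multiplicative group $F(q)^*$, notes that its size is the order $r$ of $a$, and then invokes (implicitly) Lagrange's theorem to conclude $r\mid q-1$, whence $a^{q-1}=(a^r)^{(q-1)/r}=1$. So the paper relies on the subgroup/coset machinery, whereas your approach sidesteps group theory entirely and uses only the field axioms (no zero divisors, existence of inverses) together with a counting/bijection argument. Your route is arguably more elementary and keeps the promise of avoiding Lagrange explicit; the paper's route is shorter to state once one is willing to cite the divisibility of the subgroup order.
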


\begin{proof} Assume that $F$ is a finite field.
Let $[a]=\{a, a^2, a^3,\cdots, \}$ be the set of elements in $F(q)$
generated by $a$. $([a],.)$ forms a subgroup of $F(q)^*=F(q)-\{0\}$,
where ``." is the multiplication operation over field $F(q)$.
Therefore, the order $r$ of $a$ is the size of $[a]$. Therefore, $r$
is a divisor of $q-1$. So, $a^r=a^{q-1}=1$.

\end{proof}

We give Lemma~\ref{encoding-lemma} to convert an instance
of~\threeSAT~into a binary string. We give
Definition~\ref{normalized-def} to normalize the input of an
instance of~\threeSAT.

\begin{definition}\label{normalized-def}
 Assume that an instance $C_1\bigwedge C_2\bigwedge\cdots \bigwedge C_m$ of \threeSAT~ of $n$ variables and satisfies the conditions
 below:
\begin{enumerate}[1.]
\item
each $C_i$ has at most three literals,
\item
no variable appears in two literals of the same clause,
\item
 all
clauses have a different set of literals, and
\item
its $n$ variables are $x_1,x_2,\cdots,x_n$ that are indexed from $1$
to $n$
\end{enumerate}
We have the following definitions:
\begin{itemize}
\item
 Define $E_l(x_i)=(i, 1)$ and
$E_l(\overline{x_i})=(i, 0)$.
\item
Define $E_c((y_i\bigvee y_j\bigvee
   y_k)))=(E_l(y_i), E_l(y_j), E_l(y_k))$ for each
clause
   $(y_i\bigvee y_j\bigvee y_k)$.
\item
Define the normalized representation of $C_1\bigwedge
C_2\bigwedge\cdots \bigwedge C_m$  of \threeSAT~to be $(E_c(C_1),
E_c(C_2),\cdots, E_c(C_m))$.
\item
The logical value TRUE ($1$) is treated as special instance of 3SAT,
and we define its normalized representation to be $(1)$.
\item
The logical value FALSE ($0$) is treated as special instance of
3SAT, and we define its normalized representation to be $(0)$.
\end{itemize}
\end{definition}

\begin{lemma}\label{encoding-lemma}
Assume an instance $f$ of \threeSAT~ is a normalized representation
as Definition~\ref{normalized-def}.
\begin{enumerate}
\item\label{case1}
There is a polynomial time encoding method $E(.)$ such that given an
instance $f$ of at most $n$ variables of \threeSAT, $E(n,f)$ is a
0,1-string of length $8n^4$.
\item\label{case2}
There is a polynomial time decoding method $D(.)$ such that given a
0,1-string $s=E(n,f)$ for some instance $f$ with at most $n$
variables of \threeSAT, $D(s)=f$.
\item\label{case3}
There is a polynomial time algorithm $H(.)$ such that $H(1^n)$
generates a polynomial size boolean circuit $V_n(.)$ such that given
a
 0,1-string $s$ of length $8n^4$, $V_n(s)= 1$ if $s=E(n,f)$ for some instance of at most  $n$ variables of \threeSAT, and 0 otherwise.
\end{enumerate}
\end{lemma}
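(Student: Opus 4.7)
The plan is to build explicit fixed-width bit layouts for parts~\ref{case1} and~\ref{case2}, and then assemble $V_n$ in part~\ref{case3} as a conjunction of polynomial-size subcircuits checking the structural conditions of Definition~\ref{normalized-def}.

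For the encoding in~\ref{case1}, the key observation is that the number of distinct clauses on at most $n$ variables is bounded by $O(n^3)$, since each clause is an unordered collection of at most three literals on $n$ variables. I would allocate $\ell = \lceil\log_2(n+1)\rceil + 1$ bits per literal (the variable index followed by a sign bit), $3\ell$ bits per clause slot (using the reserved index $0$ as a placeholder for missing literals in clauses of fewer than three literals), a short length-prefix field giving the number of clauses $m$, and two constant-width ``kind'' tag bits to accommodate the special instances $(0)$ and $(1)$. The resulting payload is $O(n^3 \log n)$ bits, which fits inside $8n^4$ bits for all sufficiently large $n$; the remainder is padded with zeros, and finitely many small values of $n$ can be handled by hardcoding.

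For~\ref{case2}, decoding follows the layout: read the tag, return $(0)$ or $(1)$ directly when indicated, otherwise read the length field and parse off $m$ successive clauses of width $3\ell$ bits. Every step is linear in the string length, hence polynomial in $n$.

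For~\ref{case3}, I would define $V_n$ as the AND of subcircuits that verify: (a) the tag and length fields are well-formed with $m$ in the admissible range; (b) every listed literal's variable index lies in $\{1,\ldots,n\}$, and placeholders appear only in legal positions; (c) all bits past the encoded payload are zero; (d) within each clause slot, no two present literals share an index; and (e) no two of the $m$ listed clauses are equal as sets of literals. Checks~(a)--(d) are direct bitwise tests on fixed-width fields and cost $O(n^3 \polylog n)$ gates each. The main obstacle will be~(e): since clauses are sets but the encoding fixes an order, $V_n$ must recognize equality across all $3!$ permutations of the literals within a clause. I plan to implement this as an AND, over every ordered pair of clause slots and every permutation of literal positions, of bitwise equality tests, yielding an $O(n^6 \polylog n)$-size subcircuit, still comfortably polynomial. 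The uniformity algorithm $H(1^n)$ writes out this layered description mechanically from $n$ in polynomial time, so that $V_n$ accepts exactly the strings of the form $E(n,f)$ as required.
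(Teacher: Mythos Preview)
Your proposal is correct and follows essentially the same approach as the paper: a syntactic encoding of the normalized representation padded to length $8n^4$, a direct parser for decoding, and a polynomial-time validity check compiled into $V_n$; the paper's sketch is terser, simply encoding the symbols via ASCII with a $10^k$ padding tail and asserting that validity is polynomial-time checkable (hence circuit-implementable) without spelling out your per-field and clause-distinctness tests. One small wording slip in your check~(e): to enforce that no two clauses coincide as sets you want an AND over pairs of the \emph{negation} of an OR over literal permutations of bitwise-equality, not ``an AND \ldots\ of bitwise equality tests'' as written, but the intended construction is clear and the size bound is unaffected.
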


\begin{proof} We prove the three statements below:

Statement~\ref{case1}: Given a normalized representation an instance
of \threeSAT, just replace each symbol with ASCI table to transfer
it into a binary string. Append $10^k$ for some $k$ so that the
total length is exactly equal to $8n^4$. Each 3CNF instance has at
most $24{n\choose 3}<4n^3$ different clauses. $8n^4$ binary bits are
enough to encode any 3CNF instance of at most $n$ variables.

Statement~\ref{case2}: It is straight forward to decode the binary
string into an instance of \threeSAT~ by using the ASCI table.

Statement~\ref{case3}: With a polynomial time, we can check if a
binary string is a binary string to encode an valid instance of a
\threeSAT. It can be converted into a polynomial size boolean
circuit.

\end{proof}

\begin{definition}
Let $C(x_1,x_2,\cdots, x_n):\{0,1\}^n\rightarrow \{0,1\}$ be a
boolean circuit, and $A(y_1,y_2,\cdots, y_n): F(q)^n\rightarrow
F(q)$ be an arithmetic circuit over a finite field $F(q)$. We say
$C(.)$ and $A(.)$ are {\it equivalent } if for any $a_1,a_2,\cdots,
a_n\in \{0,1\}$, $C(a_1,a_2,\cdots, a_n)=0 \Leftrightarrow
A(a_1,a_2,\cdots, a_n)=0$ in the field $F(q)$; and
$C(a_1,a_2,\cdots, a_n)=1 \Leftrightarrow A(a_1,a_2,\cdots, a_n)=1$
in the field $F(q)$.
\end{definition}

The following Lemma~\ref{bool-arithmetic-lemma} shows how a boolean
circuit is converted into an equivalent arithmetic circuit with a
similar size.

\begin{lemma}\label{bool-arithmetic-lemma}
For any boolean circuit $C(.)$, then there is an equivalent
arithmetic circuit $A_C(.)$ over a field $F(q)$ such that
$|A_C(.)|=O(|C(.)|)$. Furthermore, $A_C(.)$ can be constructed from
$C(.)$ in a polynomial time of $|C(.)|$.
\end{lemma}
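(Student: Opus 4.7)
The plan is to perform a straightforward gate-by-gate replacement, traversing $C(\cdot)$ in topological order. For each boolean gate I would substitute a constant-size arithmetic gadget over $F(q)$: replace NOT $\overline{x}$ by $1-x$, AND $x \wedge y$ by $x \cdot y$, and OR $x \vee y$ by $x+y-x\cdot y$. Each substitution introduces at most three arithmetic gates, so $|A_C(\cdot)| = O(|C(\cdot)|)$, and the traversal itself is linear in $|C(\cdot)|$ and hence polynomial time. Fan-out greater than one causes no difficulty because the underlying DAG structure of $C(\cdot)$ is preserved; each vertex is simply expanded into a small sub-DAG with the same out-edges.

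For correctness I would argue by structural induction on the gates of $C(\cdot)$, ordered topologically. The invariant is that whenever the input wires carry values in $\{0,1\}$, the output wire of each gadget also carries a value in $\{0,1\} \subseteq F(q)$ equal to the boolean output of the corresponding gate in $C(\cdot)$. The base case (input wires, where the bits $0,1$ are reinterpreted as the field elements $0,1 \in F(q)$) is immediate, and the inductive step is a short case analysis matching the truth tables of NOT, AND, OR against the arithmetic expressions evaluated at boolean inputs.

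The one point meriting attention — and the closest thing to an ``obstacle'' here — is that the equivalence demanded by the definition is sharp: the arithmetic output must be literally $0$ or literally $1$ in $F(q)$, not merely nonzero. This forces one to check the OR gadget in characteristic $2$, where $1 \vee 1$ must still map to $1$; here $1 + 1 - 1 \cdot 1 = -1$, and $-1 = 1$ in any field of characteristic $2$, so the gadget behaves correctly independently of $q$. The NOT and AND cases are uniform across all characteristics. Beyond this elementary verification, the proof is mechanical and should occupy only a few lines.

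Strategically, the value of the lemma lies not in its depth but in fixing a uniform arithmetic encoding of boolean circuits over an arbitrary finite field, which the subsequent constructions of $A^{*}_{C}(\cdot)$ and of the ``validity'' function $G(\cdot)$ will build on when reducing the verification that $C(\cdot)$ decides $\threeSAT$ to an instance of $\PIT_{q}$.
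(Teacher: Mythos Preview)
Your proposal is correct and matches the paper's own proof essentially exactly: the paper also performs a gate-by-gate replacement with $\neg a \mapsto 1-a$, $a\wedge b \mapsto a\cdot b$, and $a\vee b \mapsto 1-(1-a)(1-b)$ (algebraically identical to your $a+b-ab$), observing that this yields $|A_C(\cdot)|=O(|C(\cdot)|)$ in polynomial time. If anything, you supply more detail than the paper, which omits the inductive correctness argument and the characteristic-$2$ check.
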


\begin{proof} We just show how to simulate the three AND, OR, and
NOT gates in a boolean circuit with arithmetic operations. The
arithmetic circuit is constructed by simulating the boolean circuit
$C(.)$ gate by gate. For an AND operation $a\bigwedge b$, it can be
converted into product $a\cdot b$ over $F(q)$. For an OR operation
$a \bigvee b$, it can be converted into $1-(1-a)(1-b)$. For an NOT
operation $\neg{a}$, it is converted into $1-a$. Since each gate in
$C(.)$ is transformed into
 $O(1)$ gates in $A_C(.)$, we have $|A_C(.)|=O(|C(.)|)$. It is easy to
see that the total time to construct $A_C(.)$ is a polynomial time
of $|C(.)|$.
\end{proof}

\begin{definition}\label{nor-encoding-def}
Let $f$ be a normalized representation of an instance of \threeSAT.
Define $E(f)$ to be the {\it normalized binary encoding} of $f$,
where $E(.)$ is as defined in Lemma~\ref{encoding-lemma}. Define
$one_n=E(n, (1))$ and $zero_n=E(n,(0))$ for the normalized binary
representation of true and false respectively, where $E(.)$ is given
in Lemma~\ref{encoding-lemma}.
\end{definition}

\begin{lemma}\label{exclude-lemma}
Let $F(q)$ be a fixed finite field. Then there is a polynomial time
algorithm that given an unary integer $1^n$, it generates an
arithmetic circuit $G_n(x_1,x_2,\cdots,x_m)$ with $m=8n^4$ such that
\begin{enumerate}
\item
$G_n(x_1,x_2,\cdots,x_m)=0$ if at least one of $x_1,x_2,\cdots, x_m$
is not in $\{0,1\}$;
\item
$G_n(x_1,x_2,\cdots,x_m)=0$ if $x_1x_2\cdots x_m$ is not a
normalized binary encoding of an instance of \threeSAT~ with at most
$n$ variables; and
\item
$G_n(x_1,x_2,\cdots,x_m)\not=0$ if $x_1x_2\cdots x_m$ is a
normalized binary encoding of an instance of \threeSAT~ with at most
$n$ variables.
\end{enumerate}
\end{lemma}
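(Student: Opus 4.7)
The plan is to build $G_n$ as a product of two pieces: a Boolean-encoding check and a ``domain filter'' that forces the output to zero whenever some input coordinate lies outside $\{0,1\}$. First I would apply Lemma~\ref{encoding-lemma}(iii) to obtain, in polynomial time from $1^n$, a Boolean circuit $V_n(x_1,\dots,x_m)$ with $m=8n^4$ such that $V_n(s)=1$ exactly when $s$ is a normalized binary encoding of a 3SAT instance on at most $n$ variables, and $V_n(s)=0$ otherwise. I would then apply Lemma~\ref{bool-arithmetic-lemma} to turn $V_n$ into an equivalent arithmetic circuit $A_{V_n}$ over $F(q)$ of size $O(|V_n|)$. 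On Boolean inputs, $A_{V_n}$ agrees with $V_n$, but on non-Boolean inputs its value is unconstrained, which is why a second factor is needed.

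Next I would construct the domain-filter polynomial $p(x)\in F(q)[x]$ satisfying $p(0)\neq 0$, $p(1)\neq 0$, and $p(a)=0$ for every $a\in F(q)\setminus\{0,1\}$. Since the field is fixed, I can simply take
\[
p(x)=\prod_{a\in F(q)\setminus\{0,1\}}(x-a),
\]
which has degree $q-2$ and can be expanded and hard-coded as a constant-size arithmetic subcircuit; its nonvanishing at $0$ and $1$ is immediate because no factor vanishes there. (Alternatively, using Lemma~\ref{classical-lemma} one can write this polynomial as $(x^{q}-x)/(x(x-1))$, but the explicit product is simpler here.) I would then define
\[
G_n(x_1,\dots,x_m)\;=\;A_{V_n}(x_1,\dots,x_m)\cdot\prod_{i=1}^{m} p(x_i).
\]
This circuit has size $O(|A_{V_n}|+mq)=\poly(n)$, and its construction from $1^n$ is clearly polynomial-time in $n$ since $q$ is fixed.

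Finally I would verify the three properties. If some $x_i\notin\{0,1\}$, then $p(x_i)=0$, so the product factor kills $G_n$, giving condition~(i). If all $x_i\in\{0,1\}$, then $\prod_i p(x_i)=p(0)^{\#\{i:x_i=0\}}p(1)^{\#\{i:x_i=1\}}\neq 0$, and $A_{V_n}(x_1,\dots,x_m)=V_n(x_1,\dots,x_m)\in\{0,1\}$ by equivalence. So $G_n$ vanishes exactly when $V_n=0$, i.e., when $x_1\cdots x_m$ is not a valid normalized encoding, giving condition~(ii); and $G_n\neq 0$ precisely when $x_1\cdots x_m$ is a valid normalized encoding, giving condition~(iii).

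The only real obstacle is isolating $\{0,1\}$ from the rest of $F(q)$ by a polynomial that can be written down uniformly in $n$, and this is handled by the fact that $q$ is a fixed constant in this lemma, making $p(x)$ a constant-size gadget regardless of $n$; everything else is bookkeeping on top of Lemmas~\ref{encoding-lemma} and~\ref{bool-arithmetic-lemma}.
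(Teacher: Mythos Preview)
Your proposal is correct and follows essentially the same approach as the paper: build $G_n$ as the product of the arithmetized validity circuit $A_{V_n}$ with a per-coordinate filter polynomial that vanishes off $\{0,1\}$. The only cosmetic difference is the choice of filter: the paper uses $R(x)=1-(x(x-1))^{q-1}$ (relying on Lemma~\ref{classical-lemma}) in place of your explicit product $p(x)=\prod_{a\in F(q)\setminus\{0,1\}}(x-a)$, but both are constant-size gadgets with the same vanishing behavior on $F(q)$, so the arguments are interchangeable.
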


\begin{proof} By Lemma~\ref{encoding-lemma}, we let $V_n(f)$ be a boolean circuit such that $V_n(f)\not=0$ if and only if
$f$ is a normalized binary encoding of an instance of~\threeSAT. Let
$A_V(.)$ be the arithmetic circuit defined by
Lemma~\ref{bool-arithmetic-lemma}.

Define $R(x)=1-(x(x-1))^{q-1}$.
It is easy to see that $R(x)\not=0$ if and only if $x\in \{0,1\}$ by
Lemma~\ref{classical-lemma}.

Finally, we define $G_n(x_1,x_2,\cdots,x_m)=R(x_1)R(x_2)\cdots
R(x_m)A_{V_n}(x_1,x_2,\cdots, x_m)$. It is easy to see that $G_n(.)$
satisfies expected properties.
\end{proof}

\begin{lemma}\label{reducing-lemma}
Assume that each input instance of \threeSAT~ is a normalized binary
encoding (see Definition~\ref{nor-encoding-def}. Then there is a
polynomial time algorithm such that given $1^n$, it generates
$n^{O(1)}$ size arithmetic circuits $S_{n,0}(.)$, and $S_{n,1}(.)$
such that the following are satisfied:
\begin{enumerate}
\item
$S_{n,0}(f)$ generates a normalized binary encoding for $g(0,
x_2,\cdots, x_k)$ if $f$ is a normalized binary encoding of a
\threeSAT~ instance $g(x_1, x_2,\cdots, x_k)$ with $0\le k\le n$;
\item
 $S_{n,1}(f)$ generates a
normalized binary encoding for $g(1, x_2,\cdots, x_k)$ if $f$ is a
normalized binary encoding of a \threeSAT~ instance $g(x_1,
x_2,\cdots, x_k)$ with $0\le k\le n$; and

\item
$S_{n,i}(f)=f$ for $i\in \{0,1\}$ and $f\in\{zero_k,one_k\}$ with
$0\le k\le n$.
\end{enumerate}
\end{lemma}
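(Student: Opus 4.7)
The plan is to reduce the construction to the already available tool \textbf{Lemma~\ref{bool-arithmetic-lemma}}: first build a polynomial-time deterministic algorithm (equivalently, a polynomial-size boolean circuit) that performs the required symbolic substitution on normalized encodings, and then translate that boolean circuit gate-by-gate into an arithmetic circuit over $F(q)$. Since the input length is fixed at $m=8n^4$ (by Lemma~\ref{encoding-lemma}) and the substitution is a straightforward syntactic operation on a clause list, the boolean circuit will have size $n^{O(1)}$, and hence so will $S_{n,0}(.)$ and $S_{n,1}(.)$.

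First I would describe the substitution at the symbolic level. Given a normalized representation $f$ of a 3SAT instance $g(x_1,\ldots,x_k)$, the algorithm for $S_{n,i}(.)$ walks through the clause list $(E_c(C_1),\ldots,E_c(C_m))$ and, for each clause, inspects its (at most three) literals $(j,b)$. If $j=1$: when $b=i$ the literal evaluates to $1$ so the entire clause is satisfied and must be dropped; when $b=1-i$ the literal evaluates to $0$ and must be deleted from the clause. If $j\neq 1$, the literal is kept unchanged. If after the simplification some clause becomes empty, the whole formula has been falsified and the output must be $zero_k$; if instead every clause has been dropped, the formula is satisfied and the output must be $one_k$. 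In the remaining case the simplified clause list is reassembled and re-padded with $10^*$ to total length exactly $8n^4$.

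The special cases are handled by inspection: if $f=one_k$ or $f=zero_k$, the substitution algorithm recognizes the padded encodings of the constants $(1)$ or $(0)$ and simply returns $f$ unchanged. This can be tested by comparing $f$ to the fixed strings $one_k,zero_k$ for each $k\le n$, all of which are computable in polynomial time and hence implementable by a polynomial-size boolean subcircuit. Inputs that are not valid normalized encodings need not satisfy any particular specification, since the lemma only constrains the behavior on valid encodings; this matches the way $G_n(.)$ from Lemma~\ref{exclude-lemma} will later zero-out the invalid cases.

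Once this boolean subroutine $B_{n,i}(.)$ is in hand, I would invoke Lemma~\ref{bool-arithmetic-lemma} on each output bit to obtain the arithmetic circuit $S_{n,i}(.)$, whose size is a constant factor times that of $B_{n,i}(.)$, hence still $n^{O(1)}$; polynomial-time constructibility from $1^n$ follows from the same lemma. The main obstacle I anticipate is purely bookkeeping: keeping the output length rigidly at $8n^4$ across all three collapse regimes (reduction to $zero_k$, reduction to $one_k$, and ordinary simplification) while making sure the padding scheme and the ASCII-level clause separators produced by the substitution agree exactly with the encoding $E(.)$ fixed in Lemma~\ref{encoding-lemma}, so that $S_{n,i}(f)$ is indeed a normalized binary encoding and not merely a string representing the same formula.
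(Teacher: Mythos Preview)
Your proposal is correct and follows essentially the same approach as the paper: observe that the substitution $g(x_1,\ldots,x_k)\mapsto g(i,x_2,\ldots,x_k)$ is a polynomial-time syntactic operation on normalized encodings, hence computable by a polynomial-size boolean circuit, and then invoke Lemma~\ref{bool-arithmetic-lemma} to pass to an arithmetic circuit. The paper's own proof is a three-sentence sketch of precisely this outline; your version simply fills in the clause-level details and the handling of the constants $one_k,zero_k$ that the paper leaves implicit.
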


\begin{proof}
It is easy to see that there is a polynomial time algorithm to
generate the formulas $g(0, x_2,\cdots, x_k)$, $g(1, x_2,\cdots,
x_k)$ with $0\le k\le n$.
Thus, we can get a boolean circuits to generate
them. By Lemma~\ref{bool-arithmetic-lemma}, we can get the
equivalent arithmetic circuits to generate them respectively.
\end{proof}

\subsection{From Arithmetic Circuits to PIT}

In this section, we show how to convert the arithmetic expressions
developed in the last section and a circuit for \threeSAT~into a PIT
problem.

The following Lemma~\ref{trans-arithm-lemma} shows how to use the
PIT problem over a finite field to check if a boolean circuit
decides \threeSAT. It transform a boolean circuit into an arithmetic
circuit in a polynomial number of steps.

\begin{lemma}\label{trans-arithm-lemma}Let $F(q)$ be a fixed finite
field. Then there is a polynomail time algorithm such that given a
circuit $C_n(.)$, it generates another arithmetic circuit
$A^*_{C_n}(.)$ over a finite field $F(q)$  such that $C_n(.)$
decides  instances for \threeSAT~with at most $n$ variables if and
only if $A^*_{C_n}(.)$ is identical to zero.
\end{lemma}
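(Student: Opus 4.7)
The plan is to construct $A^*_{C_n}(f) = H(f) \cdot G_n(f)$, where $G_n$ is the domain-guard polynomial from Lemma~\ref{exclude-lemma} that vanishes on every $f$ failing to be a valid normalized binary encoding of a 3SAT instance with at most $n$ variables, and $H$ enforces the correct behavior of $C_n$ on the encodings themselves. This way identity testing of $A^*_{C_n}$ is automatic off the encodings and reduces to checking $H(f) = 0$ on them.

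To build $H$, I would arithmetize $C_n$ into $A_{C_n}$ via Lemma~\ref{bool-arithmetic-lemma} and construct the substitution circuits $S_{n,0}, S_{n,1}$ via Lemma~\ref{reducing-lemma}. The natural DPLL-style recursion $C_n(f) = C_n(S_{n,0}(f)) \OR C_n(S_{n,1}(f))$ becomes the arithmetic check
\[
H_{\mathrm{rec}}(f) = A_{C_n}(f) - 1 + (1 - A_{C_n}(S_{n,0}(f)))(1 - A_{C_n}(S_{n,1}(f))),
\]
which vanishes at a valid boolean encoding exactly when $C_n$ obeys the recursion there. The main obstacle is the two base encodings $one_n$ and $zero_n$: because $S_{n,i}$ fixes them (Lemma~\ref{reducing-lemma}), $H_{\mathrm{rec}}$ there collapses to $A_{C_n}(f)(A_{C_n}(f) - 1)$, which only forces $A_{C_n}(f) \in \{0,1\}$ and does not distinguish the required value $1$ on $one_n$ from $0$ on $zero_n$. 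To discriminate these base values I would reuse the Fermat trick of Lemma~\ref{classical-lemma} to construct, for each $b \in \{one_n, zero_n\}$, an indicator
\[
P_b(f) = \prod_{i=1}^{8n^4}\bigl(1 - (f_i - b_i)^{q-1}\bigr),
\]
which equals $1$ at $b$ and $0$ at every other point of $F(q)^{8n^4}$. Then
\[
H(f) = \bigl(1 - P_{one_n}(f) - P_{zero_n}(f)\bigr)\,H_{\mathrm{rec}}(f) + P_{one_n}(f)\,(A_{C_n}(f) - 1) + P_{zero_n}(f)\,A_{C_n}(f)
\]
switches off the recursive clause exactly at the two base encodings and replaces it by the direct checks $C_n(one_n) = 1$ and $C_n(zero_n) = 0$.

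The equivalence is then verified by a case split on the input: off the encodings $G_n$ kills $A^*_{C_n}$; at the two base encodings $H$ vanishes iff $C_n$ takes the correct base values; at every other valid encoding $H(f) = H_{\mathrm{rec}}(f)$ vanishes iff $C_n$ satisfies the branching recursion. In the forward direction, all three conditions hold when $C_n$ decides 3SAT, so $A^*_{C_n}$ is identically zero. In the converse, since $G_n(f) \neq 0$ at every valid encoding, $A^*_{C_n} \equiv 0$ forces $H$ to vanish at each encoding, giving the correct base values plus the recursion; induction on the number of free variables then yields that $C_n$ decides 3SAT correctly on every instance with at most $n$ variables. The construction is polynomial-time in $|C_n|$ and $n$ because $q$ is a fixed constant and each ingredient (Lemmas~\ref{bool-arithmetic-lemma}, \ref{reducing-lemma}, and \ref{exclude-lemma}) is polynomial-time constructible.
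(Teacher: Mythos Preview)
Your proposal is correct and shares the paper's top-level architecture --- $A^*_{C_n}=H\cdot G_n$ with $G_n$ from Lemma~\ref{exclude-lemma} as the domain guard and an induction on the number of free variables for the converse --- but it handles the two base encodings $one_n$, $zero_n$ in a genuinely different way. The paper introduces three fresh formal variables $y_0,y_1,y_2$ and sets
\[
H(f,y_0,y_1,y_2)=y_0\bigl(A_{C_n}(one_n)-1\bigr)+y_1\,A_{C_n}(zero_n)+y_2\,H_{\rm rec}(f),
\]
so that the vanishing of $A^*_{C_n}(f,y_0,y_1,y_2)=H\cdot G_n(f)$ forces each of the three coefficient expressions to vanish separately once one evaluates at a valid encoding where $G_n\neq 0$. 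You instead keep the original variable set and build Fermat indicator polynomials $P_{one_n},P_{zero_n}$ to switch between the recursive clause and the two base checks. Both devices solve exactly the obstruction you identified, namely that $H_{\rm rec}$ collapses to $A_{C_n}(f)\bigl(A_{C_n}(f)-1\bigr)$ at the fixed points of $S_{n,i}$ and therefore cannot by itself pin down the required base values. The paper's slack-variable trick is shorter and avoids constructing the $P_b$ circuits; your approach keeps $A^*_{C_n}$ a function of $f$ alone at the cost of a few more gates. Either way the construction is polynomial-time and the equivalence argument goes through as you describe.
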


\begin{proof} We assume that all instances of \threeSAT~ with at most $n$
variables have normalized binary encoding of length $8n^4$ as input
for $C_n(.)$. Let $S_{n,0}(.)$ and $S_{n,1}(.)$ be defined as in
Lemma~\ref{reducing-lemma}. Let $A_{C_n}(.)$ be the arithmetic
circuit that is equivalent to  $C_n(f)$ by
Lemma~\ref{bool-arithmetic-lemma}. Let $one_n$ be the normalized
binary encoding of logical constant TRUE $(1)$, and let $zero_n$ be
the normalized binary encoding of logical constant FALSE $(0)$ (see
Definition~\ref{nor-encoding-def}). Let $y_0, y_1, y_2$
be new variables that do not appear in $A_{C_n(.)}$.  We have the
arithmetic circuit
\begin{eqnarray*}
H(f, y_0, y_1, y_2
)=&&
y_0(A_{C_n}(one_n)-1)+y_1A_{C_n}(zero_n)+\\
&&y_2(A_{C_n}(f)-(1-(1-A_{C_n}(S_{n,0}(f)))(1-A_{C_n}(S_{n,1}(f)))).
\end{eqnarray*}

Let $G_n(.)$ be the arithmetic circuit defined by
Lemma~\ref{exclude-lemma}. Define $A^*_{C_n}(f, y_0, y_1, y_2)=H(f,
y_0, y_1, y_2)G_n(f)$.

Assume that circuit $C_n(.)$ decides \threeSAT~for all instance of
at most $n$ variables, and takes the normalized binary encoding of
length $8n^4$ as input. For each normalized binary encoding $f$ of
an instance of \threeSAT, we have $H(f, y_0, y_1, y_2)=0$. This is
because recursive relation for each decider of \threeSAT. If $f$ is
not a normalized binary encoding of a valid instance of \threeSAT,
we have $G_n(f)=0$. Therefore, $A^*_{C_n}$ is identical to zero.

Assume that $A^*_{C_n}$ is identical to zero. We need to verify that
${C_n}(.)$ is a circuit for \threeSAT. For each valid instance $f$
with at most $n$ variables of \threeSAT, we have $G_n(f)\not=0$ by
Lemma~\ref{exclude-lemma}. Thus, $H(f, y_0, y_1, y_2)=0$. It
confirms the $C_n(.)$ satisfies the recursive relation for a
\threeSAT~decider. For each instance $g(x_1,\cdots, x_n)$ with $n$
variables for \threeSAT, let $a_1,\cdots, a_k\in \{0,1\}$ be an
assignment for its first $k$ variables with $0\le k\le n$. We can
still find a normalized binary encoding $f_k$ for $g(a_1,\cdots,a_k,
x_{k+1},\cdots, x_n)$ and $f_k$ has length $8n^4$. Since
$G_n(f_k)\not=0$ and $H(f_k, y_0, y_1, y_2)=0$, $C_n(.)$ satisfies
the recursive relation for a \threeSAT~decider at the cases
$g(a_1,\cdots,a_k, x_{k+1},\cdots, x_n)$ for all $0\le k\le n$..


We have that ${C_n}(.)$ is a circuit for \threeSAT~if and only if
$A^*_{C_n}(f, y_0, y_1, y_2)$ is zero since it verifies if the
circuit ${C_n}(.)$ satisfies the recursion for~\threeSAT~instance
satisfiability.
\end{proof}

\subsection{From Derandomization to Separations}
In this section, we show that derandomizing PIT over a finite field
implies separation of computational complexity classes. The proof of
main theorem is given here.

\begin{proof}[Theorem~\ref{main-thm}]
Assume $\NT(t(n))\subseteq \NP^{\NP}\cap \PPOLY$.
Let $K$ be an arbitrary language of $\NT(t(n))$ under polynomial
time many-one reduction. Let $M^{\threeSAT}(.)$ be a polynomial time
nondeterministic Turing machine to accept $K$, and runs in a
polynomial time $p(n)$ that is nondecreasing function from $N$ to
$N$. Let $N(.)$ be an $O(h(n))$ time nondeterministic Turing machine
that decides $\PIT_q$. We have the following nondeterministic
algorithm for $K$.

\vskip 10pt

{\bf Nondeterministic Algorithm for $K$}

Input $x$ of length $n$,

\begin{enumerate}[1.]
\item
\qquad Guess a circuit $C_{p(n)}(.)$  for deciding the instances of
variables at most $p(n)$ for \threeSAT;

\item
\qquad Generate an arithmetic circuit $A^*_{C_{p(n)}}(.)$ $\PIT_q$
problem to verify $C_{p(n)}(.)$  by Lemma~\ref{trans-arithm-lemma};

\item\label{verify-step}
\qquad Run $N(A^*_{C_{p(n)}}(.))$ nondeterministically to decide if
$A^*_{C_{p(n)}}(.)$ is identical to zero;

\item
\qquad Nondeterministically select a path $P^*$ in
$M^{\threeSAT}(x)$;

\item
\qquad If step~\ref{verify-step} is successful, use
$A^*_{C_{p(n)}}(.)$ to answer all the queries to~\threeSAT~in path
$P^*$;

\item
\qquad Output yes, if $P^*$ accepts;

\end{enumerate}

{\bf End of Algorithm}

\vskip 10pt

Since $M(.)$ runs in polynomial time $p(n)$, the instance of queries
made by $M(.)$ has at most $p(n)$ variables. Since $\NT(t(n))\in
\PPOLY$ implies \threeSAT$\in \PPOLY$, there is a polynomial size
boolean circuit $C_{p(n)}(.)$ to decide \threeSAT~ for all instances
with at most $p(n)$ variables. Let $q_1(n)$
 be a polynomial with $|C_{p(n)}(.)|\le q_1(n)$. We have an
  arithmetic circuits $A^*_{C_{p(n)}}(.)$ that is equivalent with
 $C_{p(n)}(.)$ by Lemma~\ref{bool-arithmetic-lemma}. We also have $|A^*_{C_{p(n)}}(.)|\le
 q_2(n)$ for some polynomial $q_2(n)$. Step~\ref{verify-step} in the
 algorithm takes $O(h(q_2(n)))$ nondeterministic steps. For some constant $c>0$, the entire
 computation is in $O(h(n^c)+n^c)=O(t'(n))$ nondeterministic steps.

The nondeterministic algorithm above shows that $K$ is in
$\NT(t'(n))$. Since $K$ is an arbitrary language in $\NT(t(n))$, we
have $\NT(t(n))\subseteq \NT(t'(n))$. This contradicts the well
known hierarchy theorem (see Theorem~\ref{Zak-theorem}) for
nondeterministic computation classes.

\end{proof}

\section{Generalization to Bounded Depth Circuits}

In this section, we consider the problem for the $\PIT$ with bounded
depth arithmetic circuits, and its connection to the
super-polynomial lower bounds of bounded depth boolean circuits. It
is an open problem to prove $\NEXP\not\subseteq \NCone$, where
$\NCone$ is the class of languages that have polynomial size $O(\log
n)$-bounded depth boolean circuits.

\begin{definition}
Let $d(n)$ be a function from $N$ to $N$. Let $F(q)$ be a finite
field of size $q$. Define $\PIT_q(d(n))$ to be the polynomial
identity testing problem that decides if a polynomial computed by an
arithmetic circuit of depth at most $d(n)$ is identical to zero over
field $F(q)$.
\end{definition}

\begin{definition}
Let $d(n)$ be a function from $N$ to $N$. A {\it $d(n)$-bounded
depth} boolean circuits is the class of boolean circuits that
consists of AND, OR, and NOT gates with unbounded fan-in for AND and
OR gates. Define {\it Depth$(d(n))$-PC} to be the class of languages
that have polynomial size $d(n)$-bounded depth boolean circuits.
\end{definition}

\begin{theorem}\label{main2-thm}
Let $t(n)$ and $t'(n)$ be time constructible nondecreasing
superpolynomial functions from $N$ to $N$ with $t'(n+1)=o(t(n))$.
Let $h(n)$ be a nondecreasing function from $N$ to $N$ such that for
every fixed $c>0$, $h(n^{c})+n^c\le t'(n)$ for all large $n$. Let
$F(q)$ be a finite field of size $q$. Let $d(n)$ be a function from
$N$ to $N$ with $d(n)\ge \log n$. If $\PIT_q(O(d(n))\in \NT(h(n))$,
then $\NT(t(n))\not\subseteq \NP^{\NP}\cap $ Depth$(d(n))$-PC.
\end{theorem}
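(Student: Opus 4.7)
The plan is to imitate the proof of Theorem \ref{main-thm} essentially verbatim, while carefully tracking the \emph{depth} of every arithmetic circuit constructed along the way. Assume for contradiction that $\NT(t(n)) \subseteq \NP^{\NP}\cap$ Depth$(d(n))$-PC. Let $K \in \NT(t(n))$ and let $M^{\threeSAT}$ be a polynomial-time oracle machine deciding $K$ in time $p(n)$. Under the depth hypothesis, $\threeSAT$ has polynomial-size boolean circuits of depth $d(n)$, so for each $n$ there is such a circuit $C_{p(n)}$ that decides $\threeSAT$ on inputs with at most $p(n)$ variables, and $C_{p(n)}$ has depth $d(p(n))$.

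The key step is to reverify that each of Lemmas \ref{bool-arithmetic-lemma}, \ref{exclude-lemma}, \ref{reducing-lemma}, and \ref{trans-arithm-lemma} preserves depth up to a constant factor on top of the ambient bound $d(n)$. For Lemma \ref{bool-arithmetic-lemma}, the gate-by-gate replacement ($\wedge \mapsto \cdot$, $\vee \mapsto 1-(1-a)(1-b)$, $\neg a \mapsto 1-a$) blows depth by at most a constant factor, so $A_{C_{p(n)}}$ has depth $O(d(p(n)))$. For Lemma \ref{exclude-lemma}, $R(x) = 1 - (x(x-1))^{q-1}$ has depth $O(\log q) = O(1)$ since $q$ is fixed; the product $R(x_1)\cdots R(x_m)$ with $m = 8n^4$ can be computed by a balanced tree of depth $O(\log n)$; and the validity-check $V_n$ is a simple syntactic test on $8n^4$ bits, realizable in depth $O(\log n)$ (indeed in $\mathrm{AC}^0$). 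Since the theorem assumes $d(n) \ge \log n$, these contributions are absorbed into $O(d(n))$. The substitution circuits $S_{n,0}, S_{n,1}$ in Lemma \ref{reducing-lemma} are likewise purely syntactic and realizable in $O(\log n) \subseteq O(d(n))$ depth. Finally, the composition in Lemma \ref{trans-arithm-lemma} that builds $A^*_{C_{p(n)}} = H \cdot G_n$ only adds a constant number of additions and multiplications on top of the $A_{C_{p(n)}}$ subcircuits, so the final arithmetic circuit has size polynomial in $n$ and depth $O(d(p(n)))$. Because $d$ is nondecreasing and the encoded input length $N$ to PIT is polynomial in $n$ with $N \ge p(n)$, this depth is $O(d(N))$, matching the hypothesis $\PIT_q(O(d(n))) \in \NT(h(n))$.

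With the depth bound in hand, the nondeterministic algorithm for $K$ is identical to the one in the proof of Theorem \ref{main-thm}: guess $C_{p(n)}$, construct $A^*_{C_{p(n)}}$ by Lemma \ref{trans-arithm-lemma}, call the $\NT(h(\cdot))$ machine for $\PIT_q(O(d(n)))$ to verify that $A^*_{C_{p(n)}} \equiv 0$ (which by Lemma \ref{trans-arithm-lemma} certifies that $C_{p(n)}$ decides $\threeSAT$ on inputs of size $\le p(n)$), then nondeterministically simulate $M^{\threeSAT}(x)$ using $C_{p(n)}$ to answer oracle queries. The resource accounting is the same: for some constant $c$ the whole computation runs in $O(h(n^c) + n^c) = O(t'(n))$ nondeterministic steps. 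Thus $K \in \NT(t'(n))$, contradicting the nondeterministic time hierarchy (Theorem \ref{Zak-theorem}).

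The only place where genuine checking is required, rather than a direct carry-over, is verifying the depth bounds inside Lemma \ref{exclude-lemma} and Lemma \ref{reducing-lemma}: the auxiliary circuits $V_n$, $S_{n,0}$, $S_{n,1}$ were originally built only with a polynomial-\emph{size} guarantee, so one must observe that they admit $O(\log n)$-depth realizations as straightforward syntactic parsers/substituters, after which the hypothesis $d(n) \ge \log n$ absorbs them into the overall $O(d(n))$ depth budget. This is the main, and essentially only, obstacle beyond the proof of Theorem \ref{main-thm}.
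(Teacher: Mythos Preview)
Your proposal is correct and follows essentially the same approach as the paper's own sketch: mimic the proof of Theorem~\ref{main-thm} while tracking circuit depth, using a depth-preserving analogue of Lemma~\ref{bool-arithmetic-lemma} and the observation that the auxiliary circuits $S_{n,0}$, $S_{n,1}$ (and, as you note, $V_n$ and the product in $G_n$) are realizable in $O(\log n)$ depth, which is absorbed by the hypothesis $d(n)\ge\log n$. You in fact supply more detail than the paper's sketch, which singles out only the depth of $S_{n,0}$ and $S_{n,1}$ explicitly; one minor caveat is that you invoke ``$d$ is nondecreasing'' when relating $d(p(n))$ to $d(N)$, an assumption not stated in the theorem (nor addressed by the paper), but this is a small technical point rather than a substantive gap.
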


\begin{proof}[Sketch]
The proof is similar to that of Theorem~\ref{main-thm}. We need to
have a similar lemma like Lemma~\ref{bool-arithmetic-lemma} to show
that a bounded  depth $k$ boolean circuit has an equivalent bounded
depth $O(k)$ arithmetic circuit. With $d(n)\ge \log n$, we also have
a lemma similar to Lemma~\ref{trans-arithm-lemma} such that a
bounded $d(n)$ depth boolean
 circuit for~\threeSAT~can be converted into a
$\PIT_q(O(d(n)))$ problem to verify it. This is because $S_{n,0}(.)$
and $S_{n,1}(.)$ have polynomial size $O(\log n)$-depth boolean
circuits.
\end{proof}

\begin{corollary}
Let $t(n)$ and $t'(n)$ be time constructible nondecreasing
superpolynomial functions from $N$ to $N$ with $t'(n+1)=o(t(n))$.
Let $h(n)$ be an nondecreasing function from $N$ to $N$ such that
for every fixed $c>0$, $h(n^{c})+n^c\le t'(n)$ for all large $n$.
Let $d(n)$ be a function from $N$ to $N$ with $d(n)\ge \log n$. Let
$F(q)$ be a finite field of size $q$. If $\PIT_q(O(d(n)))\in
\NT(h(n))$, then $\NT(t(n))\not\subseteq \PH\cap $
Depth$(O(d(n)))$-PC.
\end{corollary}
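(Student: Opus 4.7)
The plan is to mimic the passage from Theorem~\ref{main-thm} to Corollary~\ref{coro-first-corollary}, substituting Theorem~\ref{main2-thm} for Theorem~\ref{main-thm} and the bounded-depth circuit class in place of $\PPOLY$. I will argue by contradiction: assume $\PIT_q(O(d(n))) \in \NT(h(n))$ but also $\NT(t(n)) \subseteq \PH \cap$ Depth$(O(d(n)))$-PC, and derive a violation of Theorem~\ref{main2-thm}.

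First, observe that Depth$(O(d(n)))$-PC is by definition a class of languages with polynomial-size circuits, so Depth$(O(d(n)))$-PC $\subseteq \PPOLY$. Hence the assumption $\NT(t(n)) \subseteq \PH \cap$ Depth$(O(d(n)))$-PC in particular implies $\NT(t(n)) \subseteq \PH \cap \PPOLY$. Next I invoke the Karp--Lipton theorem: if a class closed under polynomial-time reductions lies in $\PPOLY$ together with $\NP \subseteq$ some level of $\PH$, then $\PH$ collapses. More concretely, $\NT(t(n)) \supseteq \NP$ (since $t$ is superpolynomial), so the inclusion $\NP \subseteq \PPOLY$ holds, which by Karp--Lipton gives $\PH = \Sigma_2^{\P} = \NP^{\NP}$.

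Combining these, the hypothesis $\NT(t(n)) \subseteq \PH \cap$ Depth$(O(d(n)))$-PC becomes $\NT(t(n)) \subseteq \NP^{\NP} \cap$ Depth$(O(d(n)))$-PC. Together with $\PIT_q(O(d(n))) \in \NT(h(n))$, this directly contradicts the conclusion of Theorem~\ref{main2-thm}, which asserts $\NT(t(n)) \not\subseteq \NP^{\NP} \cap$ Depth$(d(n))$-PC under exactly the same hypotheses on $t, t', h, d, q$. This contradiction establishes the corollary.

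There is essentially no main obstacle — the argument is a two-line reduction to the previously proved theorem, and the only substantive ingredient is the Karp--Lipton collapse, which is a standard result. The one small thing worth flagging is that one must check Karp--Lipton applies when the class assumed to have polynomial-size circuits is $\NT(t(n))$ rather than $\NP$ itself; this is immediate since $\NP \subseteq \NT(t(n))$ and the conclusion of Karp--Lipton only requires $\NP \subseteq \PPOLY$ to force $\PH = \Sigma_2^{\P}$. No new lemmas or constructions are needed beyond what already appears in the paper.
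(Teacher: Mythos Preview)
Your proposal is correct and follows essentially the same approach as the paper: assume the inclusion, use $\NP\subseteq\NT(t(n))\subseteq\PPOLY$ to invoke Karp--Lipton and collapse $\PH$ to $\NP^{\NP}$, then contradict Theorem~\ref{main2-thm}. The paper's proof is the same two-line argument, and your added remark that Depth$(O(d(n)))$-PC $\subseteq\PPOLY$ and $\NP\subseteq\NT(t(n))$ simply makes explicit what the paper leaves implicit.
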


\begin{proof}
Assume $\NT(t(n))\subseteq \PH\cap$Depth$(O(d(n)))$-PC. By Karp and
Lipton's thereom~\cite{KarpLipton}, we have
$\PH=\sum_2^P=\NP^{\NP}$. It follows from Theorem~\ref{main2-thm}.
\end{proof}

\begin{corollary}
If $\PIT_q(O(\log n))\in \subNEXP$ for a finite field $F(q)$, then
$\NEXP\not\subseteq \NCone$.
\end{corollary}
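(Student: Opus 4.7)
The plan is to follow the same argument used in the earlier corollary deriving $\NEXP\not\subseteq\PPOLY$ from $\PIT_q\in\subNEXP$, but applied to Theorem~\ref{main2-thm} at $d(n)=\log n$ in place of Theorem~\ref{main-thm}. First I would assume for contradiction that $\PIT_q(O(\log n))\in\subNEXP$ and $\NEXP\subseteq\NCone$. Since $\NCone\subseteq\PPOLY$, the second assumption already yields $\NEXP\subseteq\PPOLY$, and the Impagliazzo--Kabanets--Wigderson theorem~\cite{ImpagliazzoKabanetsWigderson} then forces $\NEXP=\PH$ exactly as in the earlier corollary. Noting that $\NCone\subseteq$ Depth$(O(\log n))$-PC (since every fan-in-$2$, $O(\log n)$-depth circuit is in particular an unbounded-fan-in, $O(\log n)$-depth circuit), I conclude $\NEXP\subseteq\PH\cap$ Depth$(O(\log n))$-PC, and the remaining task is to contradict this inclusion.

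Second, I would invoke the immediately preceding corollary at $d(n)=\log n$, which concludes $\NT(t(n))\not\subseteq\PH\cap$ Depth$(O(\log n))$-PC under its hypotheses on $t,t',h$. Take $t(n)=2^n$ and $t'(n)=2^{n/2}$; these are time-constructible, nondecreasing, superpolynomial, and satisfy $t'(n+1)=o(t(n))$. The hypothesis $\PIT_q(O(\log n))\in\subNEXP$ supplies an $\NT(2^{n^\epsilon})$ algorithm for every $\epsilon>0$; inspection of the proof of Theorem~\ref{main-thm} shows that only a single polynomial exponent $c$---arising from the size bound on the guessed $O(\log n)$-depth boolean circuit for \threeSAT---needs to satisfy $h(n^c)+n^c\le t'(n)$, so picking $\epsilon<1/(2c)$ makes $h(n^c)=2^{n^{c\epsilon}}$ comfortably less than $2^{n/2}$ for all large $n$. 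The preceding corollary then yields $\NT(2^n)\not\subseteq\PH\cap$ Depth$(O(\log n))$-PC, contradicting the inclusion from the first paragraph and thereby establishing $\NEXP\not\subseteq\NCone$.

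The main obstacle, and the only point at which this corollary requires more than its $\PPOLY$ analogue, is confirming that the identity-testing instance $A^*_{C_{p(n)}}$ produced from an $O(\log n)$-depth boolean decider for \threeSAT{} has depth $O(\log n)$, so that the verification step indeed lies inside $\PIT_q(O(\log n))$. This is already the content of the sketch of Theorem~\ref{main2-thm}: the boolean-to-arithmetic translation of Lemma~\ref{bool-arithmetic-lemma} inflates depth by only a constant factor, the substitution circuits $S_{n,0}$, $S_{n,1}$ of Lemma~\ref{reducing-lemma} and the excluder $G_n$ of Lemma~\ref{exclude-lemma} admit polynomial-size, $O(\log n)$-depth implementations, and their combination in Lemma~\ref{trans-arithm-lemma} preserves depth up to a constant. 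Hence no further work beyond what is implicit in the sketch of Theorem~\ref{main2-thm} is required, and the hypothesis $d(n)\ge\log n$ is used in exactly the place it is needed.
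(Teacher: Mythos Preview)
Your proposal is correct and follows essentially the same argument as the paper: assume both hypotheses, invoke IKW to collapse $\NEXP$ into $\PH$ (hence into $\NP^{\NP}$), and derive a contradiction from Theorem~\ref{main2-thm}; the paper cites that theorem directly while you route through its $\PH$ corollary, a cosmetic difference. Your explicit handling of the parameter choices $t,t',h$ and of the containment $\NCone\subseteq$ Depth$(O(\log n))$-PC is more detailed than the paper's two-line proof (in fact the paper \emph{defines} $\NCone$ to equal Depth$(O(\log n))$-PC and leaves the instantiation of $t,t',h$ implicit), and your third paragraph merely recapitulates material already absorbed into the sketch of Theorem~\ref{main2-thm}.
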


\begin{proof}
Assume that $\PIT_q\in \subNEXP$ and $\NEXP\subseteq
\NCone=$Depth$(O(\log n))$-PC. By Impagliazzo, Kabanets, and
Wigderson's theorem~\cite{ImpagliazzoKabanetsWigderson},
$\NEXP=\PH$. We have a contradiction by Theorem~\ref{main2-thm}.
\end{proof}

\section{Conclusions
}

The result developed in this shows that derandomizing the PIT in any
finite field implies NEXP does not have nonuniform polynomial size
circuits. It gives right motivation to study the derandomization of
PIT in finite fields that the computational complexity community has
spent much efforts. We hope that the results in this paper brings a
tool to achieve the separation of NEXP from BPP via derandomizing
$\PIT_p$ for a prime number $p$ such as $2$. Since there exists an
oracle to collapse NEXP to BPP by Heller~\cite{Heller}, separating
$\NEXP$ from $\BPP$ requires a new way to go through the barrier of
relativization.

Another interesting open problem is if derandomizing $\PIT$ over $Z$
implies $\NEXP\not\subseteq\PPOLY$ (In other words, $\PIT_Z\in
\subNEXP\Rightarrow \NEXP\not\subseteq\PPOLY$?). Our technology
fails on integers $Z$. We cannot obtain a similar result as
Lemma~\ref{exclude-lemma} over the ring $Z$ of integers.


\vskip 10pt

{\bf Acknowledegments:} The author is grateful to Bohan Fan, Cynthia
Fu, and Feng Li for their proofreading and suggestions for an
earlier version of this paper. This research is supported in part by
NSF Early Career Award CCF-0845376.


\end{document}